\documentclass[journal,transmag]{IEEEtran}
\usepackage{cite}
\usepackage{amssymb}
\usepackage{amsfonts}
\usepackage[cmex10]{amsmath}
\usepackage[pdftex]{graphicx}
\usepackage{array}
\usepackage{stfloats}
\usepackage{url}
\usepackage{epstopdf}
\usepackage{cases}
\usepackage{amsthm}
\allowdisplaybreaks

\newtheorem{theorem}{Theorem}
\newtheorem{lemma}{Lemma}

\newtheorem{remark}{Remark}

\newcommand{\RR}{\mathbb{R}}
\newcommand{\HD}{\mathcal{H}_{D}(\Omega)}
\newcommand{\HN}{\mathcal{H}_{N}(\Omega)}
\newcommand{\grad}{\nabla}
\newcommand{\curl}{\nabla\times}
\newcommand{\diver}{\nabla\cdot}
\newcommand{\jump}[1]{\left[\!\left[#1\right]\!\right]}
\newcommand{\bn}{\mathbf n}
\newcommand{\bF}{\mathbf F}
\newcommand{\bG}{\mathbf G}
\newcommand{\bE}{\mathbf E}
\newcommand{\bH}{\mathbf H}
\newcommand{\bN}{\mathbf N}
\newcommand{\bD}{\mathbf D}
\newcommand{\bA}{\mathbf A}

\begin{document}

\title{Harmonic Vector Fields and Betti Numbers in Bounded Three-Dimensional Electromagnetic Domains}
\author{\IEEEauthorblockN{Wei Jiang\IEEEauthorrefmark{1} and Jie Liu\IEEEauthorrefmark{2}~\IEEEmembership{Member,~IEEE}}
\IEEEauthorblockA{\IEEEauthorrefmark{1}School of Mechatronics Engineering, Guizhou Minzu University, Guiyang 550025, China}
\IEEEauthorblockA{\IEEEauthorrefmark{2}School of Mathematical Science, Guizhou Normal University, Guiyang 550025, China}
\thanks{Manuscript received XXXX XX, XXXX;
Corresponding author: Wei Jiang (email: jwmathphy@163.com).}}


\IEEEtitleabstractindextext{
\begin{abstract}
The topology of a bounded three-dimensional domain can strongly affect electromagnetic fields. In a topologically complex domain, a curl-free field may not have a globally single-valued scalar potential and a divergence-free field may also fail to have a global vector potential. These topological effects lead naturally to harmonic vector fields in the Helmholtz decomposition. This paper gives a geometric and constructive study of such fields using vector analysis, circulation integrals, cutting surfaces, scalar Laplace problems, Stokes' theorem, and Green's identity. The first Betti number is interpreted through independent handle-type circulations, whereas the second Betti number counts enclosed voids. Direct proofs are given for the dimensions of the Neumann and Dirichlet harmonic-field spaces. The analysis is also extended to anisotropic lossless media. The results provide a simple topological interpretation of harmonic fields and physical DC modes in bounded electromagnetic resonators.
\end{abstract}
\begin{IEEEkeywords}
Betti number, DC modes, harmonic field, Helmholtz decomposition, resonant cavity, topology.
\end{IEEEkeywords}
}

\maketitle
\IEEEdisplaynontitleabstractindextext
\IEEEpeerreviewmaketitle

\section{Introduction}

\IEEEPARstart{M}{ost} electromagnetic textbooks introduce scalar and vector potentials, as well as the Helmholtz decomposition, without emphasizing the topology of the field domain. This is usually harmless in the whole space or in a topologically simple bounded region. The situation changes when a bounded region contains handles or enclosed voids. In such a domain, a curl-free field does not necessarily admit a globally single-valued scalar potential. Similarly, the existence of a global vector potential for a divergence-free field may require additional global conditions. These topological effects are closely related to harmonic vector fields.

The relation between vector fields and the topology of bounded three-dimensional domains has been studied from both mathematical and electromagnetic viewpoints. Vector-calculus and Hodge-type decompositions show that harmonic fields represent finite-dimensional components that cannot, in general, be removed by local differential conditions alone \cite{Cantarella2002,Amrouche1998}. In computational electromagnetism, harmonic Neumann and harmonic Dirichlet spaces arise naturally from the de Rham complex, and their dimensions are determined by the Betti numbers of the domain \cite{Hiptmair2002}. More systematic treatments of topology in electromagnetic field theory and computation can be found in \cite{Bossavit1998Book,GrossKotiuga2004}.

Potential formulations provide a direct manifestation of these topological effects. In multiply connected domains, a curl-free field may possess nonzero circulation around independent handles and therefore fail to admit a globally single-valued scalar potential. Cutting surfaces are commonly introduced to remove this obstruction and to represent the missing global information explicitly \cite{Kotiuga1989,Benedetti2012}. Homology and cohomology have also been used to describe the global degrees of freedom associated with electromagnetic boundary-value problems and magneto-quasistatic formulations \cite{Pellikka2010,DlotkoSpecogna2013}. Related topological structures also appear in electrostatic vector-potential formulations through second-cohomology generators \cite{Specogna2011}.

The same ideas arise naturally in finite-element formulations. Finite-element exterior calculus provides a general framework that preserves the differential-complex and cohomological structure of Maxwell equations \cite{Arnold2006}. The role of computational domain topology in scalar- and vector-potential formulations has also been studied systematically \cite{AlonsoValli2010}. Constructive finite-element bases of harmonic fields and of the first de Rham cohomology group have been developed for three-dimensional electromagnetic problems \cite{Alonso2013}. In multiply connected domains, circulation constraints along suitable cycles can be used to control the topological kernel of the curl operator \cite{Alonso2018}. Harmonic Dirichlet and Neumann fields also play an important role in Maxwell equations on perforated domains \cite{Qin2020}. More general harmonic Dirichlet--Neumann fields have been studied in a functional-analytic setting \cite{Pauly2022}.

Topological harmonic fields also have a direct physical interpretation in electromagnetic resonators. For resonators with complex geometric topology, mixed finite-element formulations have been developed to distinguish physical electromagnetic modes from nonphysical numerical modes \cite{Jiang2016}. Physical zero-frequency modes were later identified as genuine electromagnetic modes whose multiplicities depend on the topology of the resonator \cite{Jiang2019}. Numerical eigensolvers for anisotropic cavity resonators further showed the importance of separating physical modes from spurious zero modes by enforcing the divergence constraint \cite{Jiang2020}.

Harmonic vector fields are therefore well established in mathematical treatments of Maxwell equations and in topology-aware computational electromagnetics. However, their role is often encountered indirectly through scalar-potential cuts, global degrees of freedom, operator null spaces, or zero-frequency modes. The direct relation among harmonic vector fields, Betti numbers, and physical DC modes is consequently not always made explicit in the electromagnetics literature.

The purpose of this paper is not to claim that the classical Betti number dimension formulas are new. The aim is to give a direct and constructive electromagnetic interpretation of these formulas using ordinary vector analysis. The first and second Betti numbers are described geometrically through independent handles and enclosed voids. Constructive bases are then developed for the harmonic Neumann and harmonic Dirichlet spaces using circulation integrals, cutting surfaces, and scalar Laplace problems. Their dimensions are established using Stokes' theorem, Green's identity, and global potential arguments. The results are further extended to inhomogeneous anisotropic lossless media. Finally, the harmonic field dimensions are connected with physical DC modes and finite-element null spaces in electromagnetic resonators.

\section{Harmonic Vector Fields and Topology}
\indent In this section, we discuss harmonic vector fields in electromagnetics. The first and second Betti numbers are used to describe the topology of the domain. Finally, a basic lemma gives the condition for a curl-free field to have a globally single-valued scalar potential.

Let $\Omega\subset\RR^3$ be a bounded, orientable, and connected domain with piecewise smooth boundary. Only domains of finite topology are considered. Write
\begin{equation}\label{eq:boundary}
\partial\Omega=\Gamma_0\cup\Gamma_1\cup\cdots\cup\Gamma_M,
\qquad \Gamma_i\cap\Gamma_j=\varnothing\quad (i\ne j),
\end{equation}
where $\Gamma_0$ is the outer boundary component and $\Gamma_1,\ldots,\Gamma_M$ surround the enclosed voids. The outward unit normal is denoted by $\bn$. For readability, the arguments are written for sufficiently smooth fields and boundaries.

\subsection{Harmonic vector fields}
According to whether the curl and divergence vanish, vector fields in electromagnetics can be divided into the following four basic types:
\begin{enumerate}
\item \emph{Irrotational and divergent fields}: $\curl\bF=0$ and $\diver\bF\ne0$. A typical example is the electrostatic electric field in a charged region.

\item \emph{Rotational and solenoidal fields}: $\curl\bF\ne0$ and $\diver\bF=0$. A typical example is the magnetostatic magnetic flux density in a current-carrying region.

\item \emph{Rotational and divergent fields}: $\curl\bF\ne0$ and $\diver\bF\ne0$. General sourced time-varying electromagnetic fields may belong to this type.

\item \emph{Irrotational and solenoidal fields}: $\curl\bF=0$ and $\diver\bF=0$. Such fields arise naturally in static Maxwell problems and electromagnetic DC modes.

\end{enumerate}

The fourth type is the main concern of this paper. In a bounded domain $\Omega$, boundary conditions are needed to define the corresponding harmonic field space. This space may contain nontrivial solutions, and its dimension depends on the topology of the domain.

A harmonic Neumann field is a solution of the following vector boundary value problem:
\begin{subnumcases}{\label{eq:HNproblem}}
\curl\bF=\mathbf 0, & $\text{in }\Omega$,\label{eq:HN1}\\
\diver\bF=0, & $\text{in }\Omega$,\label{eq:HN2}\\
\bn\cdot\bF=0, & $\text{on }\partial\Omega$.\label{eq:HN3}
\end{subnumcases}
Thus a harmonic Neumann field is tangent to the boundary. Since \eqref{eq:HNproblem} is a homogeneous linear system, all of its solutions form a linear space. This linear space is denoted by $\HN$, which is called the harmonic Neumann space.

Similarly, a harmonic Dirichlet field is a solution of
\begin{subnumcases}{\label{eq:HDproblem}}
\curl\bF=\mathbf 0, & $\text{in }\Omega$,\label{eq:HD1}\\
\diver\bF=0, & $\text{in }\Omega$,\label{eq:HD2}\\
\bn\times\bF=\mathbf 0, & $\text{on }\partial\Omega$.\label{eq:HD3}
\end{subnumcases}
Thus a harmonic Dirichlet field has zero tangential component on the boundary. Equation~\eqref{eq:HDproblem} is also a homogeneous linear system. Its solutions form a linear space, denoted by $\HD$, which is called the harmonic Dirichlet space.

Under the assumptions on $\Omega$ used in this paper, these two harmonic spaces are finite-dimensional. Their dimensions are not determined by the local differential equations alone. They depend on the global topology of $\Omega$. In particular, the dimensions of $\HN$ and $\HD$ are related to the first and second Betti numbers, respectively. Note that the first and second Betti numbers are important topological invariants. This relation will be given in the following sections.

\subsubsection{Geometric interpretation of the first Betti number}

The first Betti number $b_1(\Omega)$ is a nonnegative integer that measures the number of independent handle-type circulation channels of the domain. Geometrically, it may be interpreted as the number of independent handle-type circulation cycles in $\Omega$. These cycles may be represented by non-contractible closed loops, which cannot be continuously shrunk to a point while remaining inside the domain.

The word ``independent'' is important for the first Betti number. A loop that winds twice or three times around the same handle does not introduce a new topological degree of freedom. It merely repeats the same basic circulation. Thus, a solid ball has no handle and $b_1=0$. A solid torus has one independent handle and $b_1=1$. A double-handle body has two independent handles and $b_1=2$. In general, a body with $n$ independent handles has $b_1=n$, and so forth.

Assume that $g=b_1(\Omega)$. One can choose $g$ oriented closed loops
$
\gamma_1,\ldots,\gamma_g
$
to represent the independent handle-type circulations of $\Omega$. These loops are not unique. What matters is that together they contain all independent circulation information of the domain. More precisely, for any oriented closed loop $L$ in $\Omega$, there exist integers $m_1,\ldots,m_g$ such that
$
L-\sum_{i=1}^{g}m_i\gamma_i
$
is the boundary of an oriented surface chain contained in $\overline{\Omega}$. Here, \(m_i\) is an integer whose sign indicates the winding direction around the \(i\)-th independent handle.

This property is particularly useful for curl-free fields. If $\curl\bF=\mathbf 0$, Stokes' theorem applied to the above surface chain gives
$$
\oint_L\bF\cdot d\mathbf l
=
\sum_{i=1}^{g}m_i
\oint_{\gamma_i}\bF\cdot d\mathbf l.
$$
Hence the circulation of a curl-free field around an arbitrary closed loop is completely determined by its circulations around the $g$ basic loops. In particular, if the circulation vanishes around all $\gamma_i$, then it vanishes around every closed loop in $\Omega$. This observation will form the basis of Lemma~\ref{lem:period}.

The basic loops may be chosen in different locations without changing the number $g$. For the bounded connected three-dimensional domains considered here, a set of independent representatives can in particular be chosen on the physical boundary $
\gamma_1,\ldots,\gamma_g\subset\partial\Omega.
$
Alternatively, equivalent representatives may be taken in the interior. This freedom will be important later: when a zero tangential-field condition is imposed on $\partial\Omega$, choosing the basic loops on the boundary immediately determines their circulations.

The first Betti number also has a particularly simple geometric description in terms of the boundary. Let $g_k$ denote the genus of $\Gamma_k$. Then, for the bounded connected three-dimensional domains considered here,
\begin{equation*}
b_1(\Omega)=\sum_{k=0}^{M}g_k.
\end{equation*}
Thus, the first Betti number is simply the total number of handles carried by all connected boundary components. Handles belonging to internal void boundaries contribute to $b_1$ in exactly the same way as handles belonging to the outer boundary.

\subsubsection{Geometric interpretation of the second Betti number}
The second Betti number has an even simpler interpretation for a bounded connected region $\Omega$ in $\RR^3$:
\begin{equation*}
b_2(\Omega)=M,
\end{equation*}
where $M$ is the number of enclosed voids in \eqref{eq:boundary}. In other words, $b_2$ counts independent enclosed voids.

The word ``independent'' is also important for the second Betti number. Two separated enclosed voids contribute two independent topological degrees of freedom. If two hollow regions are connected and form a single enclosed void, they count as only one. Thus, for the class of bounded three-dimensional domains considered here, $b_2(\Omega)$ counts the independent connected enclosed voids of the domain.

Table~\ref{tab:betti} summarizes several common geometries. Here a ball can be replaced by any smooth block that is topologically equivalent to a ball.
\begin{table}[!t]
\caption{The First and Second Betti numbers of Several Common Geometries}
\label{tab:betti}
\centering
\renewcommand{\arraystretch}{1.15}
\begin{tabular}{p{0.48\columnwidth}cc}
\hline
Geometry $(\Omega)$& $b_1(\Omega)$ & $b_2(\Omega)$\\
\hline
Solid ball or ordinary solid block & 0 & 0\\
Solid torus & 1 & 0\\
Double-handle solid & 2 & 0\\
Ball with one spherical void & 0 & 1\\
Ball with two spherical voids & 0 & 2\\
Ball with one toroidal void & 1 & 1\\
Trefoil-shaped solid tube & 1 & 0\\
\hline
\end{tabular}
\end{table}
The last example is useful for separating Betti numbers from knotting. A trefoil-shaped solid tube is still topologically a solid torus, so $b_1=1$ and $b_2=0$. Betti numbers count independent handles and enclosed voids, and they do not distinguish how a handle is knotted in three-dimensional space.

Two observations are important in electromagnetics. First, a simply connected domain has $b_1=0$, because every closed loop can be continuously contracted to a point. However, simple connectivity does not imply $b_2=0$. For example, a spherical shell is simply connected but has one enclosed void and hence $b_2=1$. Second, the two harmonic-field spaces considered in this paper are determined by both the curl-free and divergence-free conditions and the corresponding boundary conditions. The Neumann harmonic space is trivial when $b_1=0$, whereas the Dirichlet harmonic space is trivial when $b_2=0$. Therefore, both spaces are trivial only when $b_1(\Omega)=b_2(\Omega)=0$.

\subsection{Helmholtz decomposition and the harmonic component}
In electromagnetics, a frequently used vector-calculus picture is
\begin{equation}\label{eq:naivehelm}
\bF=\grad\phi+\curl\bA,
\end{equation}
where the first term is irrotational and the second term is solenoidal. In a 3-D bounded domain, the field structure also depends on both the boundary conditions and the topology of the domain. A more complete schematic form is
\begin{equation}\label{eq:fullhelm}
\bF=\grad\phi+\curl\bA+\mathbf h,
\end{equation}
where $\mathbf h$ is a harmonic vector field satisfying
\begin{equation}\label{eq:harmonicgeneric}
\curl\mathbf h=\mathbf 0,
\qquad
\diver\mathbf h=0.
\end{equation}
The precise orthogonality and boundary conditions attached to the three terms depend on the chosen bounded-domain Helmholtz decomposition. The important physical point is simpler: the harmonic part can be nonzero when the region is topologically nontrivial.

The Lemma~2.2 in \cite{Hiptmair2002} gives two statements that are especially relevant here. In vector notation they may be read as follows. A curl-free field can be written in the form
\begin{equation}\label{eq:HiptN}
\curl\bF=0
\quad\Longrightarrow\quad
\bF=\grad\phi+\mathbf h_N,
\end{equation}
where $\phi\in H^{1}(\Omega)$, $\mathbf h_N\in\HN$ and $\dim\mathcal H_N(\Omega)=b_1(\Omega)$.

Similarly, for a curl-free field satisfying a homogeneous tangential boundary condition,
\begin{equation}\label{eq:HiptD}
\curl\bF=0,\quad \bn\times\bF=0
\quad\Longrightarrow\quad
\bF=\grad\phi_0+\mathbf h_D,
\end{equation}
where $\phi_0\in H_{0}^{1}(\Omega)$, $\mathbf h_D\in{\mathcal H_D(\Omega)}$ and $\dim\mathcal H_D(\Omega)=b_2(\Omega)$.

The abstract theory behind \eqref{eq:HiptN} and \eqref{eq:HiptD} is well developed. The goal below is to show why these formulas are geometrically natural and how they can be obtained with familiar electromagnetic tools.

\subsection{Zero-circulation criterion for a global scalar potential}
The following lemma gives a direct vector-calculus criterion for the existence of a globally single-valued scalar potential. Its proof uses circulation integrals and Stokes' theorem, without invoking abstract cohomology theory.

\begin{lemma}[Zero-circulation criterion]\label{lem:period}
Let $\bF$ be a continuously differentiable field satisfying
\begin{equation*}
\curl\bF=\mathbf 0\qquad\text{in }\Omega.
\end{equation*}
Let $\gamma_1,\ldots,\gamma_g$ be the independent handle loops and $g=b_{1}(\Omega)$. If
\begin{equation*}
\oint_{\gamma_i}\bF\cdot d\mathbf l=0,
\qquad i=1,\ldots,g,
\end{equation*}
then the circulation of $\bF$ around every closed curve in $\Omega$ is zero, and there exists a globally single-valued scalar potential $q$ such that $\bF=\grad q$.
\end{lemma}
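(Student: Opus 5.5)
The plan has two steps: show that every circulation vanishes, then build the potential as a line integral. For the first step I use the homology property stated above for the basic loops. Let $L$ be an arbitrary oriented closed loop in $\Omega$. There are integers $m_1,\ldots,m_g$ such that $L-\sum_i m_i\gamma_i=\partial S$ for an oriented surface chain $S\subset\overline{\Omega}$. Applying Stokes' theorem on each piece of $S$ and using $\curl\bF=\mathbf 0$ gives
\begin{equation*}
\oint_L\bF\cdot d\mathbf l=\sum_{i=1}^{g}m_i\oint_{\gamma_i}\bF\cdot d\mathbf l=0 .
\end{equation*}
There is one technical point. The chain $S$ may touch $\partial\Omega$, where $\curl\bF$ is only known in the interior. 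I would handle this by pushing $S$ slightly into $\Omega$ along an inward collar of the piecewise smooth boundary. Alternatively, one can assume $\bF$ is $C^1$ up to $\overline{\Omega}$, which matches the standing smoothness convention of the paper.

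For the second step, fix a base point $x_0\in\Omega$ and define
\begin{equation*}
q(x)=\int_{x_0}^{x}\bF\cdot d\mathbf l
\end{equation*}
along any piecewise smooth path in $\Omega$. Since $\Omega$ is connected and open, such a path exists. The definition is independent of the path: two paths $P_1,P_2$ from $x_0$ to $x$ give a closed loop $P_1-P_2$, whose circulation vanishes by the first step. So $q$ is globally single-valued.

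It remains to check that $\grad q=\bF$. Take a small ball $B(x,r)\subset\Omega$ and compute $q(x+h\mathbf e_j)-q(x)$ along the straight segment from $x$ to $x+h\mathbf e_j$. This equals $\int_0^h F_j(x+t\mathbf e_j)\,dt$, and continuity of $\bF$ gives $\partial_j q(x)=F_j(x)$. Hence $q\in C^1$ and $\bF=\grad q$.

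I expect the main obstacle to be the first step, not the construction of $q$. Everything rests on the assertion that every closed loop is homologous, through a surface chain, to an integer combination of the $\gamma_i$. The paper states this as the defining property of the basic loops, so I would invoke it directly rather than prove it. I would also note that only this homological (not homotopical) statement is needed, because Stokes' theorem works on chains. The remaining care is to make sure Stokes' theorem applies on a possibly non-embedded, piecewise smooth chain. I would reduce that to the case of finitely many smooth oriented patches whose interior edges cancel in pairs.
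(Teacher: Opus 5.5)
Your proposal is correct and follows essentially the same route as the paper. Both arguments use the homology property of the basic loops and Stokes' theorem on the bounding chain to kill every circulation, then define $q$ as a path-independent line integral from a fixed base point. Your remarks on chains touching $\partial\Omega$ and your explicit check that $\grad q=\bF$ only fill in details the paper leaves implicit.
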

\begin{proof}
Let $L$ be an arbitrary oriented closed curve in $\Omega$. In general, $L$ need not be the boundary of a surface contained in $\Omega$, so Stokes' theorem cannot be applied directly to $L$.

By the choice of the independent circulation loops
$\gamma_1,\ldots,\gamma_g$, there exist integers
$m_1,\ldots,m_g$ such that the cycle $
L-\sum_{i=1}^{g}m_i\gamma_i
$
bounds an oriented surface chain $S$ contained in $\overline{\Omega}$, with its interior lying in $\Omega$. Hence
$$
\partial S
=
L-\sum_{i=1}^{g}m_i\gamma_i.
$$
Stokes' theorem can therefore be applied to this boundary,
$$
\begin{aligned}
0
&=
\int_S(\curl\bF)\cdot\mathbf n\,dS\\
&=
\oint_{\partial S}\bF\cdot d\mathbf l=
\oint_L\bF\cdot d\mathbf l
-
\sum_{i=1}^{g}m_i
\oint_{\gamma_i}\bF\cdot d\mathbf l.
\end{aligned}
$$

By the zero-circulation assumption on the independent loops,
$\oint_{\gamma_i}\bF\cdot d\mathbf l=0,~i=1,\ldots,g,$
and consequently $\oint_L\bF\cdot d\mathbf l=0$. Since $L$ is arbitrary, the circulation of $\bF$ vanishes around every closed curve in $\Omega$. Fix $\mathbf x_0\in\Omega$ and define
$$
q(\mathbf x)
=
\int_{\mathbf x_0}^{\mathbf x}
\bF\cdot d\mathbf l.
$$

If two paths connect $\mathbf x_0$ to $\mathbf x$, their difference forms a closed curve. Since the circulation around every closed curve is zero, the above integral is independent of the chosen path. Therefore $q$ is globally single-valued, and differentiation gives
$$
\bF=\grad q
\qquad\text{in }\Omega.
$$
\end{proof}

Lemma~\ref{lem:period} separates the local differential condition of vanishing curl from the global topological obstruction caused by nonzero circulations around independent handles.

\begin{remark}
The statement ``curl-free implies a global scalar potential'' is generally false in a multiply connected domain. A curl-free field always admits a local potential, but a globally single-valued potential exists only when all independent circulations around the handles vanish. Lemma~\ref{lem:period} makes this condition explicit. For example, in a solid torus, a curl-free field may still have a nonzero circulation around the handle and therefore cannot be represented by a globally single-valued scalar potential. The circulation conditions remove this topological obstruction.
\end{remark}

\section{Harmonic Neumann and Dirichlet Fields}
\indent The topology of $\Omega$ gives rise to two different types of harmonic vector fields, depending on the boundary condition. Harmonic Neumann fields are associated with independent handle-type circulations and are therefore related to the first Betti number. Harmonic Dirichlet fields are associated with independent potential differences between disconnected boundary components and are related to the second Betti number. These two spaces are studied separately below.
\subsection{Harmonic Neumann fields and the first Betti number}
\indent The main topological feature associated with a harmonic Neumann field is a nonzero circulation around an independent handle. Such a circulation cannot be represented by the gradient of a globally single-valued scalar potential in the original domain. The standard way to recover a scalar-potential representation is to introduce suitable cutting surfaces. This leads to a constructive basis for the harmonic Neumann space.

\subsubsection{Cut-potential basis}

A handle permits a nonzero circulation even though the field is curl-free. To represent such a circulation by a scalar potential, introduce one cutting surface for each independent handle. Let $g=b_1(\Omega)$ and choose $g$ connected orientable cutting surfaces
\begin{equation}\label{eq:cuts}
\Sigma_1,\ldots,\Sigma_g,
\qquad
\partial\Sigma_j\subset\partial\Omega.
\end{equation}
The cutting surfaces are chosen so that the independent handle-type circulation obstructions are removed by the cuts. The basic loops $\gamma_i$ and the cutting surfaces $\Sigma_j$ are chosen consistently. More precisely, the oriented intersection number of $\gamma_i$ with $\Sigma_j$ is $\delta_{ij}$. Thus $\gamma_i$ crosses its corresponding cut $\Sigma_i$ once with positive orientation and has zero algebraic intersection with the other cuts.

For each $j=1,\ldots,g$, let $\psi_j$ be the solution, unique up to an additive constant, of
\begin{subequations}\label{eq:cutproblem}
\begin{align}
\nabla^2\psi_j&=0,
&&\text{in~}\Omega\setminus\Sigma_j,\label{eq:cut1}\\
\frac{\partial\psi_j}{\partial n}&=0,
&&\text{on~}\partial\Omega,\label{eq:cut2}\\
\jump{\psi_j}_{\Sigma_j}&=1,\label{eq:cut3}\\
\jump{\frac{\partial\psi_j}{\partial n_j}}_{\Sigma_j}&=0.\label{eq:cut4}
\end{align}
\end{subequations}
Here $n_j$ is a fixed unit normal on $\Sigma_j$, and the same orientation is used to define the two traces and their jumps. The additive constant may be fixed, for example, by requiring zero mean. Under the assumed regularity conditions, standard variational theory for the Laplace problem gives existence and uniqueness modulo constants. This cut-potential construction is classical in the analysis of harmonic Neumann fields in multiply connected domains \cite{Alonso2018}.

The potential $\psi_j$ is unique only up to an additive constant. This nonuniqueness does not affect its gradient, since any two such potentials differ by a constant and therefore have the same gradient. Moreover, the gradient has matching traces on the two sides of the cut. Hence it defines a single-valued vector field on the original domain $\Omega$. We denote this extended gradient by
\begin{equation}\label{eq:Nj}
\bN_j=\widetilde{\grad\psi_j},
\end{equation}
where the tilde indicates the extension of $\grad\psi_j$ from the cut domain to the original uncut domain. Since $\psi_j$ is harmonic and satisfies the homogeneous Neumann condition on the physical boundary,
$$
\curl\bN_j=0,\qquad
\diver\bN_j=0,\qquad
\bn\cdot\bN_j=0.
$$
Therefore $\bN_j\in\HN$.

The prescribed potential jump determines the circulation of $\bN_j$. Since the oriented intersection number of $\gamma_i$ with $\Sigma_j$ is $\delta_{ij}$, and the potential jump across $\Sigma_j$ is equal to one according to \eqref{eq:cut3}, the line integral of $\bN_j$ around $\gamma_i$ is
\begin{equation}\label{eq:periodmatrix}
\oint_{\gamma_i}\bN_j\cdot d\mathbf l
=\delta_{ij},
\qquad i,j=1,\ldots,g.
\end{equation}
Thus $\bN_j$ has unit circulation around the $j$th independent handle and zero circulation around all the other basic handles.

\subsubsection{Dimension of the Neumann harmonic space}
\begin{theorem}\label{thm:N}
For a bounded, orientable and connected three-dimensional domain of finite topology, $\dim\HN=b_1(\Omega)$.
\end{theorem}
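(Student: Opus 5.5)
The proof will show that the fields $\bN_1,\ldots,\bN_g$ from \eqref{eq:Nj} form a basis of $\HN$, with $g=b_1(\Omega)$. Each $\bN_j$ already lies in $\HN$ by the cut-potential construction, so two things remain: linear independence and spanning. Both will come from the period matrix \eqref{eq:periodmatrix}, Lemma~\ref{lem:period}, and Green's identity.

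\emph{Linear independence.} Suppose $\sum_{j=1}^g c_j\bN_j=\mathbf 0$ in $\Omega$. Integrating this combination around $\gamma_i$ and using \eqref{eq:periodmatrix} gives $0=\sum_j c_j\delta_{ij}=c_i$ for every $i$. Hence $\dim\HN\ge g$.

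\emph{Spanning.} Let $\bF\in\HN$ be arbitrary. Set $\alpha_i=\oint_{\gamma_i}\bF\cdot d\mathbf l$ and $\bG=\bF-\sum_{j}\alpha_j\bN_j$. Then $\bG\in\HN$, and by \eqref{eq:periodmatrix} we have $\oint_{\gamma_i}\bG\cdot d\mathbf l=0$ for all $i$. Since $\curl\bG=\mathbf 0$, Lemma~\ref{lem:period} yields a single-valued potential $q$ with $\bG=\grad q$ in $\Omega$. The remaining conditions translate into $\nabla^2 q=\diver\bG=0$ in $\Omega$ and $\partial q/\partial n=\bn\cdot\bG=0$ on $\partial\Omega$. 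Green's first identity then gives
$$\int_\Omega|\bG|^2\,dV=\int_\Omega|\grad q|^2\,dV=\oint_{\partial\Omega}q\,\frac{\partial q}{\partial n}\,dS-\int_\Omega q\,\nabla^2 q\,dV=0,$$
so $\bG=\mathbf 0$ and $\bF=\sum_j\alpha_j\bN_j$. Together with independence this gives $\dim\HN=g=b_1(\Omega)$.

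\emph{Main obstacle.} The delicate step is the one taken for granted above: the existence of cutting surfaces $\Sigma_j$ dual to the loops $\gamma_i$, with intersection numbers $\delta_{ij}$, together with the justification of \eqref{eq:periodmatrix}. For the latter I would argue as follows. The curve $\gamma_i$ lies in $\Omega\setminus\Sigma_j$ except at its crossings of $\Sigma_j$. Along each arc between crossings, $\bN_j=\grad\psi_j$, so the circulation telescopes to the sum of the potential jumps at the crossings. Each crossing contributes $\pm1$ according to orientation, so the total is the algebraic intersection number, which is $\delta_{ij}$. For the existence of the cuts I would rely on the finite-topology assumption, as the paper does, citing the classical construction (e.g.\ \cite{Alonso2018}) rather than reproving it. Regularity of $\grad q$ up to the piecewise smooth boundary, needed for Green's identity, is handled by the paper's standing smoothness convention.
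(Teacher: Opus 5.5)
Your proposal is correct and follows essentially the same route as the paper's proof: linear independence from the period matrix \eqref{eq:periodmatrix}, then spanning by subtracting $\sum_j\alpha_j\bN_j$, applying Lemma~\ref{lem:period}, and using Green's first identity with the homogeneous Neumann condition. Your telescoping argument for \eqref{eq:periodmatrix} fills in a step the paper only asserts.
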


\begin{proof}
Equation \eqref{eq:periodmatrix} immediately shows that the $g=b_1(\Omega)$ fields $\bN_1,\ldots,\bN_g$ are linearly independent.

It remains to show that there are no other linearly independent Neumann harmonic fields. Let $\bF\in\HN$ and measure its $g$ basic circulations, $\alpha_i=\oint_{\gamma_i}\bF\cdot d\mathbf l$. Define $\bG=\bF-\sum_{j=1}^{g}\alpha_j\bN_j$. Then $\bG\in\HN$. Using \eqref{eq:periodmatrix}, one obtains
\begin{equation}\label{eq:Gzero}
\oint_{\gamma_i}\bG\cdot d\mathbf l=0,
\qquad i=1,\ldots,g.
\end{equation}
By \eqref{eq:Gzero} and $\nabla\times\bG={\bf0}$, Lemma~\ref{lem:period} shows that
\begin{equation}\label{eq:Ggrad}
\bG=\grad q
\end{equation}
for a globally single-valued scalar potential $q$. From $\diver\bG=0$ and $\bn\cdot\bG=0$,
\begin{equation}\label{eq:qNeu}
\nabla^2q=0\quad\text{in }\Omega,
\qquad
\frac{\partial q}{\partial n}=0\quad\text{on }\partial\Omega.
\end{equation}
Green's first identity gives
\begin{equation}\label{eq:greenN}
\int_\Omega |\grad q|^2\,d\Omega
=
\int_{\partial\Omega}q\frac{\partial q}{\partial n}\,dS
-
\int_\Omega q\nabla^2q\,d\Omega
=0.
\end{equation}
Hence $\grad q=0$, so $\bG=0$. Consequently every field in $\HN$ is a linear combination of $\bN_1,\ldots,\bN_g$, hence $\dim\HN=b_1(\Omega)$ is valid.
\end{proof}

The physical meaning of Theorem~\ref{thm:N} is simple: every independent handle permits one independent curl-free, divergence-free circulation that remains tangent to the boundary.

\subsection{Harmonic Dirichlet Fields and the Second Betti Number}\label{partd1}

\subsubsection{Existence of a Global Scalar Potential}

Let $\bF\in\HD$ and $g=b_1(\Omega)$, and choose the independent handle-type loops $\gamma_1,\ldots,\gamma_g$ on $\partial\Omega$. Since $\bn\times\bF=0$ on $\partial\Omega$, $\bF$ has no tangential component along these loops. Hence
\begin{equation}\label{eq:Dperiod}
\oint_{\gamma_i}\bF\cdot d\mathbf l=0,
\qquad i=1,\ldots,g.
\end{equation}
According to \eqref{eq:Dperiod} and $\curl\bF=0$, Lemma~\ref{lem:period} gives a globally single-valued scalar potential
\begin{equation}\label{eq:Dpot}
\bF=\grad\phi.
\end{equation}
Thus, the global potential follows from the curl-free condition together with the zero tangential boundary condition, not from the curl-free condition alone.

Since $\diver\bF=0$, $\phi$ satisfies $\nabla^2\phi=0$ in $\Omega$. Moreover, $\bn\times\grad\phi=0$ on $\partial\Omega$, so $\phi$ is constant on each connected boundary component:
\begin{equation}\label{eq:Ck}
\phi=C_k\quad\text{on }\Gamma_k,
\qquad k=0,1,\ldots,M.
\end{equation}
where $M=b_{2}(\Omega)$.
\subsubsection{One Field for Each Enclosed Void}

For each internal boundary component $\Gamma_k$, $k=1,\ldots,M$, let $\varphi_k$ solve
\begin{subnumcases}{\label{eq:hmeasure}}
\nabla^2\varphi_k=0 &$\text{in }\Omega$,\label{eq:hmeasure1}\\
\varphi_k=\delta_{kj} &$\text{on }\Gamma_j,\quad j=1,\ldots,M$,\label{eq:hmeasure2}\\
\varphi_k=0 &$\text{on }\Gamma_0$.\label{eq:hmeasure3}
\end{subnumcases}

Define $\bD_k=\grad\varphi_k$. Then $\bD_k$ is curl-free and divergence-free. Since every boundary component is an equipotential surface, $\bn\times\bD_k=0$ on $\partial\Omega$. Hence $\bD_k\in\HD$.

The fields $\bD_1,\ldots,\bD_M$ are linearly independent. Indeed, if $\sum_{k=1}^{M}c_k\bD_k={\bf0}$, then $\sum_{k=1}^{M}c_k\varphi_k$ is constant in $\Omega$. This constant is zero because the function vanishes on $\Gamma_0$. Evaluating it on $\Gamma_k$ gives $c_k=0$ for every $k$.

\begin{theorem}\label{thm:D}
For a bounded, orientable and connected three-dimensional domain of finite topology, $\dim\HD=b_2(\Omega)$.
\end{theorem}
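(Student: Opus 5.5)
The proof will mirror that of Theorem~\ref{thm:N}. The fields $\bD_1,\ldots,\bD_M$ have already been shown to lie in $\HD$ and to be linearly independent, so $\dim\HD\ge M=b_2(\Omega)$. It remains to show that they span $\HD$. Take an arbitrary $\bF\in\HD$. By the argument around \eqref{eq:Dperiod}--\eqref{eq:Ck}, which chooses the handle loops on $\partial\Omega$ and then applies Lemma~\ref{lem:period}, we can write $\bF=\grad\phi$ for a globally single-valued harmonic $\phi$. This $\phi$ takes a constant value $C_k$ on each boundary component $\Gamma_k$, $k=0,\ldots,M$.

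The key step is to subtract the correct combination of void potentials. Since adding a constant to $\phi$ does not change $\bF$, we first replace $\phi$ by $\phi-C_0$, so that $\phi=0$ on $\Gamma_0$ and $\phi=C_k-C_0$ on $\Gamma_k$. Next we define
$$
q=\phi-\sum_{k=1}^{M}(C_k-C_0)\varphi_k .
$$
By \eqref{eq:hmeasure}, $q$ is harmonic in $\Omega$ and vanishes on every $\Gamma_j$, $j=0,\ldots,M$, so $q=0$ on all of $\partial\Omega$. Green's first identity, exactly as in \eqref{eq:greenN}, now reads
$$
\int_\Omega|\grad q|^2\,d\Omega=\int_{\partial\Omega}q\frac{\partial q}{\partial n}\,dS-\int_\Omega q\nabla^2 q\,d\Omega=0 .
$$
The boundary term vanishes because $q=0$ on $\partial\Omega$, in place of the Neumann condition used before. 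Hence $\grad q=0$, and therefore $\bF=\sum_{k=1}^{M}(C_k-C_0)\bD_k$. This shows that $\{\bD_k\}$ spans $\HD$, and $\dim\HD=M=b_2(\Omega)$ by the geometric identification $b_2(\Omega)=M$.

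The only delicate point is the passage from $\bn\times\grad\phi=0$ to ``$\phi$ is constant on each $\Gamma_k$''. This needs each $\Gamma_k$ to be connected, which holds by the decomposition \eqref{eq:boundary}, and $\phi$ to be continuous up to the boundary, which is covered by the standing smoothness assumptions. The tangential gradient of $\phi$ vanishes on the connected surface $\Gamma_k$, so $\phi$ is constant there. I would also remark that the existence of the $\varphi_k$ follows from standard Dirichlet theory. Beyond these points I expect no real obstacle, since the topological input has already been isolated in Lemma~\ref{lem:period} together with the choice of loops on $\partial\Omega$.
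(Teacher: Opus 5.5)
Your proof is correct and follows the paper's argument essentially step for step: you obtain a global potential from Lemma~\ref{lem:period} using loops on $\partial\Omega$, take constants $C_k$ on each $\Gamma_k$, and subtract $\sum_{k}(C_k-C_0)\varphi_k$. The only difference is the last step, where you get $\grad q=\mathbf 0$ from Green's first identity, as in Theorem~\ref{thm:N} and the paper's anisotropic extension, while the paper's proof uses the maximum principle to obtain $q=0$ directly; either works under the standing smoothness assumptions.
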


\begin{proof}
Let $\bF\in\HD$. By \eqref{eq:Dpot} and \eqref{eq:Ck}, $\bF=\grad\phi$, where $\phi$ is harmonic in $\Omega$ and $\phi=C_k$ on $\Gamma_k$. Define
$$
q=\phi-C_0-\sum_{k=1}^{M}(C_k-C_0)\varphi_k.
$$
Then $\nabla^2q=0$ in $\Omega$ and $q=0$ on $\partial\Omega$. By the maximum principle for harmonic functions \cite{Evans2010}, $q=0$ in $\Omega$. Therefore,
$$
\bF=\sum_{k=1}^{M}(C_k-C_0)\bD_k.
$$
Hence $\bD_1,\ldots,\bD_M$ span $\HD$. Since they are also linearly independent, $\dim\HD=M=b_2(\Omega)$ holds.
\end{proof}

The physical meaning is direct. Each enclosed void allows one independent potential difference relative to the outer boundary. Therefore, each enclosed void contributes one independent harmonic Dirichlet field.

\section{Extensions and Electromagnetic Applications}
\subsection{Helmholtz decomposition revisited from an electromagnetic viewpoint}
The previous two sections allow the bounded-domain Helmholtz picture to be stated without the machinery of differential forms.

For the curl-free part of a vector field, topology gives the decomposition
\begin{equation}\label{eq:curlfreeN}
\curl\bF={\bf0}
\quad\Longrightarrow\quad
\bF=\grad\phi+
\sum_{j=1}^{b_1}\alpha_j\bN_j.
\end{equation}
The coefficients $\alpha_j$ are exactly the independent circulation integrals,
\begin{equation*}
\alpha_j=\oint_{\gamma_j}\bF\cdot d\mathbf l
\end{equation*}
when the basis is normalized as in \eqref{eq:periodmatrix}. Equation~\eqref{eq:curlfreeN} shows explicitly that a curl-free field is a global gradient if and only if all its independent circulations vanish.

If the curl-free field also satisfies the homogeneous tangential boundary condition, all independent handle circulations vanish. However, a different finite-dimensional freedom may remain because of enclosed voids. In this case,
\begin{equation}\label{eq:curlfreeD}
\curl\bF={\bf0},~~\bn\times\bF={\bf0}
~\Longrightarrow~
\bF=\grad\phi_0+
\sum_{k=1}^{b_2}\beta_k\bD_k,
\end{equation}
where $\phi_0$ may be chosen to vanish on the entire boundary, and the fields $\bD_k$ form a basis associated with the independent potential differences between the connected boundary components.

Equations \eqref{eq:curlfreeN} and \eqref{eq:curlfreeD} are the vector-calculus content behind the harmonic spaces appearing in \cite{Hiptmair2002}. They also explain the harmonic term $\mathbf h$ in \eqref{eq:fullhelm}. When $b_1(\Omega)=b_2(\Omega)=0$, there are no topological harmonic degrees of freedom, and the familiar textbook picture is recovered. When either Betti number is nonzero, dropping the harmonic component can remove physically admissible static fields from the decomposition.

This point is especially relevant to computational electromagnetics because the missing finite-dimensional spaces appear as null spaces or zero-frequency modes of discrete Maxwell operators. Edge-element methods preserve these topological structures when the discrete spaces are constructed consistently with the continuous Maxwell complex \cite{Hiptmair2002}.

\subsection{Extension to inhomogeneous anisotropic lossless media}

The above dimension formulas are topological and remain valid in an anisotropic lossless medium. Let $\boldsymbol\epsilon(\mathbf x)$ and $\boldsymbol\mu(\mathbf x)$ be the permittivity and permeability tensors of inhomogeneous, anisotropic and lossless media. Therefore, they are two bounded and uniformly Hermitian positive-definite tensors in $\Omega$.

For a PEC resonator, consider the source-free static electric-field problem
\begin{subnumcases}{\label{eq:HDC}}
\curl\bE=\mathbf 0, & $\text{in }\Omega$,\label{eq:Hdc1}\\
\diver(\boldsymbol\epsilon(\mathbf x)\bE)=0, & $\text{in }\Omega$,\label{eq:Hdc2}\\
\bn\times\bE=0, & $\text{on }\partial\Omega$.\label{eq:Hdc3}
\end{subnumcases}
The solutions of \eqref{eq:HDC} are referred to as generalized harmonic Dirichlet fields associated with the constitutive tensor $\boldsymbol\epsilon(\mathbf x)$. Since \eqref{eq:HDC} is a homogeneous linear system, its solutions form a linear space, denoted by $\mathcal H_{D,\epsilon}(\Omega)$.

The argument used in Theorem~\ref{thm:D} remains valid. In particular, \eqref{eq:Hdc1} and \eqref{eq:Hdc3}, together with Lemma~\ref{lem:period}, imply that every $\bE\in\mathcal H_{D,\epsilon}(\Omega)$ can be written as $\bE=\grad\phi$, where $\phi$ is constant on each connected boundary component.

For each internal boundary component $\Gamma_k$, $k=1,\ldots,M$, define $\varphi_k$ by
\begin{subnumcases}{\label{eq:hmeasures}}
\nabla\cdot(\boldsymbol\epsilon(\mathbf x)\nabla\varphi_k)=0 &$\text{in }\Omega$,\label{eq:hmeasures1}\\
\varphi_k=\delta_{kj} &$\text{on }\Gamma_j,~j=1,\ldots,M$,\label{eq:hmeasures2}\\
\varphi_k=0 &$\text{on }\Gamma_0$.\label{eq:hmeasures3}
\end{subnumcases}
Then $\bD_k^{\epsilon}=\grad\varphi_k$ satisfies \eqref{eq:HDC}. The same boundary-value argument used in Theorem~\ref{thm:D} shows that the fields
$\bD_1^{\epsilon},\ldots,\bD_M^{\epsilon}$ are linearly independent.

For an arbitrary $\bE=\grad\phi\in\mathcal H_{D,\epsilon}(\Omega)$, let $\phi=C_j$ on $\Gamma_j$ and define

$$
q=\phi-C_0-\sum_{k=1}^{M}(C_k-C_0)\varphi_k.
$$

Then $q=0$ on $\partial\Omega$ and
$\diver(\boldsymbol\epsilon(\mathbf x)\grad q)=0$ in $\Omega$. Green's identity gives
\begin{equation}\label{eq:epsenergy}
\int_\Omega\overline{\grad q}\cdot
(\boldsymbol\epsilon(\mathbf x)\,\grad q)\,d\Omega=0.
\end{equation}
Since $\boldsymbol\epsilon$ is uniformly Hermitian positive definite, \eqref{eq:epsenergy} implies $\grad q=\mathbf 0$. Hence
$$
\bE=
\sum_{k=1}^{M}(C_k-C_0)\bD_k^{\epsilon}.
$$
Therefore $\bD_1^{\epsilon},\ldots,\bD_M^{\epsilon}$ form a basis of the generalized harmonic Dirichlet space $\mathcal H_{D,\epsilon}(\Omega)$, and
\begin{equation}\label{eq:E0dim}
  \dim\mathcal H_{D,\epsilon}(\Omega)=M=b_2(\Omega).
\end{equation}

When the material is homogeneous, isotropic and lossless, $\mathcal H_{D,\epsilon}(\Omega)$ reduces to the usual harmonic Dirichlet space $\HD$. For a general uniformly Hermitian positive-definite tensor $\boldsymbol\epsilon(\mathbf x)$, the generalized harmonic Dirichlet fields depend on the material tensor, but the dimension of the corresponding space $\mathcal H_{D,\epsilon}(\Omega)$ remains the topological quantity $b_{2}(\Omega)$.

Similarly, consider the source-free static magnetic-field problem
\begin{subnumcases}{\label{eq:hhmc}}
\curl\bH=\mathbf 0, & $\text{in }\Omega$,\label{eq:hmc1}\\
\diver(\boldsymbol\mu(\mathbf x)\bH)=0, & $\text{in }\Omega$,\label{eq:hmc2}\\
\bn\cdot(\boldsymbol\mu(\mathbf x)\bH)=0, & $\text{on }\partial\Omega$.\label{eq:hmc3}
\end{subnumcases}
The solutions of \eqref{eq:hhmc} are referred to as generalized harmonic Neumann fields associated with the constitutive tensor $\boldsymbol\mu(\mathbf x)$. Since \eqref{eq:hhmc} is a homogeneous linear system, its solutions form a linear space, denoted by $\mathcal H_{N,\mu}(\Omega)$.

The construction is similar to that of the ordinary harmonic Neumann fields. Let $\Sigma_1,\ldots,\Sigma_g$, where $g=b_1(\Omega)$, be the cutting surfaces introduced previously. For each $j=1,\ldots,g$, let $\psi_j^{\mu}$ satisfy
\begin{subnumcases}{\label{eq:mucut}}
\diver(\boldsymbol\mu(\mathbf x)\grad\psi_j^{\mu})=0,
& $\text{in }\Omega\setminus\Sigma_j$,\label{eq:mucut1}\\
\bn\cdot(\boldsymbol\mu(\mathbf x)\grad\psi_j^{\mu})=0,
& $\text{on }\partial\Omega$,\label{eq:mucut2}\\
\jump{\psi_j^{\mu}}_{\Sigma_j}=1,\label{eq:mucut3}\\
\jump{\bn_j\cdot(\boldsymbol\mu(\mathbf x)\grad\psi_j^{\mu})}_{\Sigma_j}=0.\label{eq:mucut4}
\end{subnumcases}
Here $\bn_j$ is a fixed unit normal on $\Sigma_j$. As in the isotropic case, $\psi_j^{\mu}$ is unique up to an additive constant. The constant jump in \eqref{eq:mucut3} gives matching tangential traces of $\grad\psi_j^{\mu}$ across the cut, while \eqref{eq:mucut4} gives matching normal fluxes of $\boldsymbol\mu(\mathbf x)\grad\psi_j^{\mu}$. Hence the gradient extends from the cut domain to a field on the original domain that satisfies the generalized harmonic equations. Denote this extended field by
\begin{equation}\label{eq:Nmuj}
\bN_j^{\mu}=\widetilde{\grad\psi_j^{\mu}}.
\end{equation}
It follows from \eqref{eq:mucut} that $\bN_j^{\mu}\in\mathcal H_{N,\mu}(\Omega)$. Moreover, the unit jump in \eqref{eq:mucut3} and the dual relation between $\gamma_i$ and $\Sigma_j$ give
\begin{equation}\label{eq:muperiod}
\oint_{\gamma_i}\bN_j^{\mu}\cdot d\mathbf l
=\delta_{ij}.
\end{equation}
Hence $\bN_1^{\mu},\ldots,\bN_g^{\mu}$ are linearly independent.

It remains to show that these fields span $\mathcal H_{N,\mu}(\Omega)$. Let $\bH\in\mathcal H_{N,\mu}(\Omega)$ and define
$$
\alpha_i=\oint_{\gamma_i}\bH\cdot d\mathbf l,
\qquad
\bG=\bH-\sum_{j=1}^{g}\alpha_j\bN_j^{\mu}.
$$

By \eqref{eq:muperiod}, all independent circulations of $\bG$ vanish. Since $\curl\bG=\mathbf0$, Lemma~\ref{lem:period} gives $\bG=\grad q$ for a globally single-valued scalar potential $q$. Equations \eqref{eq:hmc2} and \eqref{eq:hmc3} then give
$$
\diver(\boldsymbol\mu(\mathbf x)\grad q)=0
\quad\text{in }\Omega,
\qquad
\bn\cdot(\boldsymbol\mu(\mathbf x)\grad q)=0
\quad\text{on }\partial\Omega.
$$

Green's identity yields
\begin{equation}\label{eq:muenergy}
\int_\Omega
\overline{\grad q}\cdot
\boldsymbol\mu(\mathbf x)\grad q\,d\Omega=0.
\end{equation}
Since $\boldsymbol\mu$ is uniformly Hermitian positive definite, \eqref{eq:muenergy} implies $\grad q=\mathbf0$. Thus $\bG=\mathbf0$, and every field in $\mathcal H_{N,\mu}(\Omega)$ is a linear combination of $\bN_1^{\mu},\ldots,\bN_g^{\mu}$. Therefore
\begin{equation}\label{eq:H0dim}
\dim\mathcal H_{N,\mu}(\Omega)=b_1(\Omega).
\end{equation}

When the material is homogeneous, isotropic and lossless, $\mathcal H_{N,\mu}(\Omega)$ reduces to the usual harmonic Neumann space $\HN$. For a general uniformly Hermitian positive-definite tensor $\boldsymbol\mu(\mathbf x)$, the generalized harmonic Neumann fields depend on the material tensor, but the dimension of the corresponding space $\mathcal H_{N,\mu}(\Omega)$ remains the topological quantity $b_{1}(\Omega)$.

Equations~\eqref{eq:E0dim} and \eqref{eq:H0dim} give a direct topological interpretation of the physical DC modes in a PEC resonator:
\begin{equation*}
\begin{aligned}
\dim\mathcal H_{D,\epsilon}(\Omega)=b_2(\Omega),\quad
\dim\mathcal H_{N,\mu}(\Omega)=b_1(\Omega).
\end{aligned}
\end{equation*}
Thus, the number of independent electric DC modes is equal to the number of enclosed voids, whereas the number of independent magnetic DC modes is equal to the number of independent handles.

If the PEC boundary has $M+1$ connected components, then $b_2(\Omega)=M$. Hence the electric DC mode space has dimension $M$, which is one less than the number of connected PEC boundary components. In contrast, the magnetic DC mode space is determined by the handle structure of the domain. Therefore, even when the boundary is connected, magnetic DC modes may still exist if the domain has nontrivial handles.

For real symmetric positive-definite constitutive tensors, the above energy arguments use the usual real inner product. For complex Hermitian positive-definite tensors, the corresponding complex inner product is used. In both cases, the constitutive tensors affect the forms of the generalized harmonic fields, but not the dimensions of the corresponding harmonic-field spaces.

\subsection{Relation to finite-element null spaces}

The above dimension formulas can also be examined numerically by finite-element discretization. For PEC resonators, edge-element formulations can be used to identify the physical DC subspace and to separate it from the ordinary gradient null space. The dimensions of the electric and magnetic DC spaces should agree with $b_2(\Omega)$ and $b_1(\Omega)$, respectively. Detailed finite-element formulations and numerical studies for topologically complex PEC resonators have been presented in \cite{Jiang2019} and are not repeated here.

\section{Conclusion}
A geometry-oriented vector-calculus explanation of Betti numbers and harmonic electromagnetic fields has been presented for bounded connected three-dimensional domains. The first Betti number $b_1$ is interpreted as the number of topologically independent handle-type circulation paths, whereas the second Betti number $b_2$ is the number of independent enclosed voids. Using only cutting surfaces, circulation integrals, scalar Laplace problems, Stokes' theorem, and Green's identity, the harmonic-field dimensions were obtained constructively as
\begin{equation*}
\dim\mathcal H_N(\Omega)=b_1(\Omega),
\qquad
\dim\mathcal H_D(\Omega)=b_2(\Omega).
\end{equation*}
These finite-dimensional spaces constitute the topological harmonic part that must be retained in bounded-domain Helmholtz decompositions. The same dimension formulas persist for uniformly positive-definite anisotropic media when the ordinary harmonic fields are replaced by the corresponding generalized harmonic fields. In electromagnetic resonators, these topological dimensions determine the multiplicities of the physical electric and magnetic DC modes. The results also explain the finite-dimensional null spaces that appear in edge-element discretizations of topologically complex resonators.

\section*{Acknowledgment}
This work is supported by the National Natural Science Foundation of China under Grant 61901131.

\end{document}